\tikzset{ font={\fontsize{10pt}{12}\selectfont}}
\newcolumntype{L}[1]{>{\raggedright\let\newline\\\arraybackslash\hspace{0pt}}m{#1}}
\newcolumntype{C}[1]{>{\centering\let\newline\\\arraybackslash\hspace{0pt}}m{#1}}
\newcolumntype{R}[1]{>{\raggedleft\let\newline\\\arraybackslash\hspace{0pt}}m{#1}}
\DeclareMathSymbol{\mlq}{\mathord}{operators}{``}
\DeclareMathSymbol{\mrq}{\mathord}{operators}{`'}
\DeclareMathSymbol{\mlqq}{\mathord}{operators}{"5C}
\DeclareMathSymbol{\mrqq}{\mathord}{operators}{`"}
\newtheorem{prop}{Proposition}
\title{ Information-Theoretic Analysis of Refractory\\ Effects in the P300 Speller} 
\author{
\IEEEauthorblockN{Vaishakhi Mayya\IEEEauthorrefmark{1}, Boyla Mainsah\IEEEauthorrefmark{1},  and Galen Reeves\IEEEauthorrefmark{1}\IEEEauthorrefmark{2}}
\IEEEauthorblockA{\IEEEauthorrefmark{1}Department of Electrical and Computer Engineering, Duke University, Durham, NC, USA\\ 
\IEEEauthorrefmark{2}Department of Statistical Science, Duke University, Durham, NC, USA}}
\begin{document}

%
%
%

%
%
%
%
%
%
%
%
%
%
%
%
%
%
%
%
%
%


\maketitle

\begin{abstract}
The P300 speller is a brain-computer interface that enables people with neuromuscular disorders to communicate based on eliciting event-related potentials (ERP) in electroencephalography (EEG) measurements. One challenge to reliable communication is the presence of refractory effects in the P300 ERP that induces temporal dependence in the user's EEG responses. We propose a model for the P300 speller as a communication channel with memory. By studying the maximum information rate on this channel, we gain insight into the fundamental constraints imposed by refractory effects. We construct codebooks based on the optimal input distribution, and compare them to existing codebooks in literature.
\end{abstract}


\section{Introduction}

A brain-computer interface (BCI) is a  system that monitors electrophysiological signals and translates the information encoded in these signals into commands that are relayed to a computer \cite {wolpaw2002}. The P300 speller, developed by Farwell and Donchin \cite{farwell1988},  is a BCI that provides an alternative means of communication for individuals with severe neuromuscular diseases, such as amyotrophic lateral sclerosis \cite{sellers2010}, that impair neural pathways that control muscles. In the extreme case of locked-in syndrome, individuals lose all voluntary muscle control and are unable to communicate verbally or via gestures. 








The P300 speller relies on eliciting and detecting event-related potentials (ERP) embedded in electroencephalography (EEG) data. These ERPs are elicited in response to specific stimulus events within the context of an oddball paradigm. The user is presented with a sequence of stimulus events that fall into one of two classes: a rare oddball, i.e. target stimulus, and a frequent non-target stimulus \cite{sutton1965}. The presentation of the rare target stimulus event elicits an ERP response that includes a distinct positive deflection called the P300 signal.

The P300 allows a user to communicate one character at a time.  In a visual P300 speller, the user is presented with an array of choices on a screen, such as the grid shown in Figure~\ref{fig:screen}. While the user focuses on a target character, subsets of characters, called flash groups, are sequentially illuminated on the grid. In this context, the illumination of a flash group is a stimulus event. Under ideal conditions, a P300 ERP is elicited each time that the target character is presented. The elicited ERPs are embedded in noisy EEG data. Following each stimulus event, a time window of the EEG waveform is analyzed to determine the likelihood that the stimulus event contains the target character. After a sufficient amount of data has been collected, the target character is estimated based on the observed responses. 




                           
\begin{figure}
\centering
  \includegraphics[trim={0 0 0 .5cm},clip, width = 0.5\columnwidth]{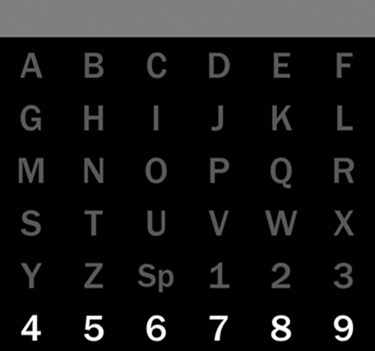}
  \caption{Example of a P300 speller visual interface. The flash group is the set of characters that are illuminated.}
  \label{fig:screen}
\end{figure}

The order in which the target and non-target stimulus events are presented plays a significant role in the ERP elicitation process. This is due, in part, to refractory effects \cite{jin2012}, where the ability to elicit a strong ERP response to every target stimulus event presentation is affected by the target-to-target interval (TTI), which is the amount of time between stimulus events. If a P300 ERP is elicited following the presentation of a target, the amplitude of a successive P300 ERPs elicited in response to subsequent target events with low TTI  may be attenuated or distorted \cite {martens2009}.  The precise behavior of the refractory effects is not well understood and can vary depending on the user and the type of system~\cite{martens2009}.


In this paper, we develop and analyze a communication model for the P300 speller that allows us to understand some of the fundamental constraints imposed by refractory effects. Our channel model consists of a finite state machine (FSM) followed by a memoryless noisy channel; together they form a finite state channel (FSC). The FSM uses $L+1$ states to model a refractory effect that last for $L$ time steps. The memoryless noise channel describes the mapping between an ERP and the response generated by processing the EEG measurements. We study properties of the distributions that maximize the mutual information rate on this channel. We provide an explicit characterization of the optimal input distribution in the noiseless case, and we use the generalized Blahut-Arimoto algorithm (GBAA) \cite{kavcic2001, vontobel2008} to compute the optimal input distributions numerically for the noisy case. We then use the optimal distribution to design codebooks (i.e.\ sequences of flash groups) that can be used for the P300 speller. Performance is assessed using numerical simulations. 
%

The rest of this document is organized as follows. Section~\ref{bgd}, reviews the previous approaches to designing flash groups for the P300 speller and provides relevant background concerning channels with memory. Section~\ref{bci} introduces our channel model and a method to design flash groups using this model. The results in are presented Section~\ref{results}.


\section {Background} \label{bgd}

\subsection{Current codebooks used for P300 speller}

Within the context of the P300 speller, a codebook $\mathbf C$ is a $W \times N$ binary matrix that indicates which of the $W$ character are flashed across $N$ trials. Each row of the matrix corresponds to the flash pattern (or codeword) of a specific character. The columns correspond  to flash groups. Given that a user's target character is $w$, the entry $\mathbf C(w,n)$ indicates whether the target character is flashed in the $n$-th trial.

%
%

Within this setting, previous work have focused on the design of codebooks in order to increase the accuracy of the P300 speller \cite{farwell1988, townsend2010, hill2009, geuze2012, coleman2012, guan04}. In the row-column paradigm (RCP)~\cite{farwell1988},   the codebook is generated using a random permutation of the characters in rows and columns in a grid layout, such as the one shown in Figure \ref{fig:screen}. Due to the randomized order of presentation of the row and column flash groups in the RCP, characters are often flashed twice consecutively.  

The checkerboard paradigm (CBP) \cite{townsend2010} was developed to mitigate refractory effects by imposing a minimum time interval between target character presentations. Due to the method of construction of the codebook in the CBP, the duration of the minimum target interval depends on the specific geometry of the grid layout. 



%
%

Other approaches have used ideas from coding theory to construct codebooks, such as maximizing the minimum Hamming distance~\cite{hill2009,geuze2012}, e.g. the D10 codebook \cite{hill2009}. However, in online studies, these codebooks resulted in similar or worse performance when compared to the RCP. One possible explanation for this behavior is the fact the design of these codebooks did not account for refractory effects. 



The explicit connection between the BCI and an information channel has been considered previously \cite{hill2009, coleman2011}. For example, Omar et al. \cite{coleman2011} represented the BCI-based communication (based on motor imagery)  as a memoryless binary symmetric channel. However, in a P300 speller context, a memoryless channel assumption does not account for system memory due to refractory effects.

\subsection{Information rates for finite-state channels} \label{infchan}

The channel model we study is an indecomposable finite-state channel (IFSC) \cite {Gallager1968}. The channel input is a discrete-time process $\{X_n\}$ supported on a finite alphabet $\mathcal{X}$. The channel state at time $n$ is modeled by a random variable $S_n \in \{1,\cdots, k\}$. The channel output at time $n$ is a random variable $Y_n \in \mathcal{Y}$  whose distribution is a function of the input $X_n$ and state $S_{n-1}$. The channel is defined by the conditional distribution $P(Y_n,S_n | X_n, S_{n-1}) =P(Y_n | X_n, S_{n-1}) P(S_n | X_n, S_{n-1})  $. 

A channel is said to be an intersymbol interference (ISI) channel if the state transitions and the output of the channel depend on current and previous inputs. In this case, it is possible to describe a state sequence that is uniquely determined by the input sequence, i.e.\ $P(Y_n | X_n, S_{n-1}) = P(Y_n | S_n, S_{n-1})$. 

We focus on the setting where the channel input $\{X_n\}$ is an $r$-th order, time-invariant Markov process. Following Kav\v{c}i\'{c}~\cite{kavcic2001}, the distribution on $\{X_n\}$ is parametrized using an $|\mathcal{X}|^r \times |\mathcal{X}|^r$ transition matrix $P$ and the corresponding mutual information rate is defined according to 
\begin{align}
\mathcal R(P) = \lim_{n \to \infty} \frac{1}{n} I(X_1^n ; Y_1^n  | S_0),
\end{align}
where $X_1^n = [X_1, \cdots, X_n]$.  For an IFSC, the limit exists and is independent of the distribution on $S_0$. The maximum information rate over $r$-th order Markov sources is defined according to
\begin{align}
\mathcal R_r = \max_{P \in \mathcal P_r} \mathcal R(P),  \label{eq:R_r}
\end{align}
where $\mathcal P_r$ is the set of all transition matrices for an $r$-th order Markov source. A distribution $P^* \in  \mathcal P_r$ is said to be optimal if it achieves the maximum in \eqref{eq:R_r}. Note that $\mathcal R_r$ provides a lower bound on the capacity of the channel.

Kav\v{c}i\'{c}~\cite{kavcic2001} provides a stochastic method for solving the optimization problem \eqref{eq:R_r} for ISI IFSCs based on a generalization of the Blahut-Arimoto algorithm. This method is extended to a larger class of IFSCs by Vontobel et al.~\cite {vontobel2008}.

\section{Proposed channel model and codebook design}  \label{bci}

\subsection{P300 speller channel model}
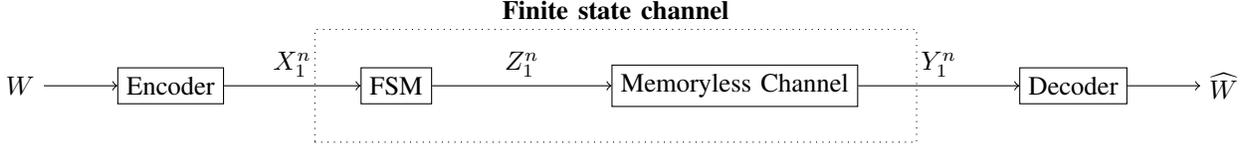
\begin{figure*}[tb]
    \centering
    \tikzstyle{int}=[draw, fill=none]
    \tikzstyle{init} = [pin edge={to-,thin,black}]
        \tikzstyle{line} = [draw, -latex']
    
    \begin{tikzpicture}

        \node  (a) {$W$ };
        
        
        \node (c) [int] [right of=a, node distance=2cm] {Encoder};
        
          \node [int] (e) [right of=c, node distance=3cm] {FSM};
%
           \node [int] (g) [right of=e, node distance=4.5cm] { Memoryless Channel};
%
           \node [int] (i) [right of=g, node distance=4.5cm] {Decoder};
%
          \node (k) [right of=i, node distance=2cm] {$\widehat W$};
        \path[->] (a) edge node{}   (c) ;
          \path[->] (c) edge node[above] {$X^n_1$} (e);
                \path[->] (e) edge node[above] {$Z_1^n$} (g) ;
        \path[->] (g) edge node[above] {$Y_1^n$} (i); 

        
        \draw[->] (i) edge  (k) ;
        
        \node [
    	rectangle,draw,
	dotted,
        	right= 1.2cm of c,
    	minimum width=8cm,
    	minimum height=1.5cm,
    	label=\textbf{Finite state channel},
    ] (pu) {};
    
    \end{tikzpicture}
      \caption{Model of the P300 speller communication channel. The target character, $W$, is encoded with a codeword, $X^n_1$, which is transmitted through a cascade of a finite state machine (FSM), with an intermediate output, $Z_1^n$, and a noisy memoryless channel. The output sequence, $Y^n_1$, is used to obtain an estimate of the target character, $\widehat{W}$. }
    \label{fig:commChannel} 
    \end{figure*}



 The P300 speller is modeled as a cascade of the FSM and a noisy memoryless channel, as shown in Figure~\ref{fig:commChannel}. 
  The states of the FSM model the memory in the channel induced by the refractory period. Throughout this paper, the time step refers to the duration between the presentation of successive flash groups.  A refractory period that lasts $L$ time steps is modeled using $L+1$ states. 

The channel input, $X_n$, is a binary variable that indicates whether the target character is present ($X_n = 1$) or not present ($X_n = 0$) in the flash groups in the $n$-th trial. The state, $S_n$, represents whether the channel is in the ground state, or one of $L$ possible refractory states. The transitions between states are determined according to
\[
S_n = \begin{cases}
G , & \text{if $X_n = 0$, $S_{n-1} = G $ or $S_{n-1} = R_L$}\\
R_l , & \text{if $X_n = 0$, $S_{n-1} = R_{l - 1}, l  \in \{2,\cdots, L\}$ }\\
R_1, &  \text{if $X_n = 1$ },
\end{cases}
\]
and are illustrated in Figure~\ref{fig:fsc}. Note that the state transition is a deterministic function of the previous $L$ channel inputs. 


The intermediate output, $Z_n$, is a binary variable that is equal to one if and only if the input is one and the channel in not in a refractory state, i.e.
\[
Z_n = \begin{cases}
1 , & \text{if $X_n = 1$, $S_{n-1} = G $}\\
0, &  \text{if $X_n = 1$, $S_{n-1} = R_{l}, l  \in \{1, 2 \cdots, L\}$} \\
0, &  \text{if $X_n = 0$ } .
\end{cases}
\]
When the channel is in one of the refractory states, $Z_n = 0$, independently of the input. 
 The output, $Y_n$, is the observed response that depends only on the intermediate output $Z_n$. 

In the context of the P300 speller,  the distribution on $X_1^n$ is induced by the codebook, $\mathbf C$, and the distribution over the target character. The intermediate output $Z_n$ indicates whether a P300 ERP was elicited for the $n$-th trial. The probabilistic  mapping from $Z_n$ to $Y_n$ models the noise induced by the EEG measurement and classification process. 

The case $L = 1$ was studied in our previous work \cite{mayya2016}. In this paper, we study the behavior of channels with general $L$.


 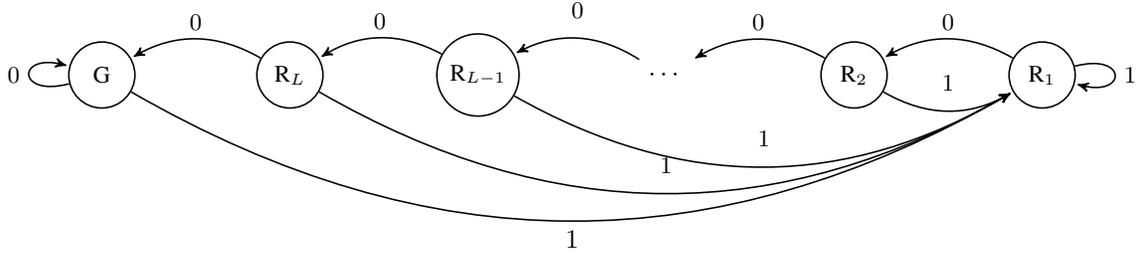
\begin{figure*}[tb]
    \centering
    \tikzstyle{int}=[draw, fill=none]
    \tikzstyle{init} = [pin edge={to-,thin,black}]
        \tikzstyle{line} = [draw, -latex']
    
 \begin{tikzpicture}[->,>=stealth',shorten >=1pt,auto,node distance=2.5cm,
                    semithick]
  \tikzstyle{every state}=[fill=none,draw=black,text=black, font=\small]

  \node[state] (G)                            {$\mbox{G}$};
  \node[state] (RL)    [right of=G]   {$\mbox{R}_{L}$};
  \node[state] (RL1)    [right of=RL] {$\mbox{R}_{L-1}$};
  \node (Rdots)    [right of=RL1]   {$\cdots$};
  \node[state] (R2)    [right of=Rdots] {$\mbox{R}_2$};  
    \node[state] (R1)    [right of=R2] {$\mbox{R}_1$};

  \path    (G) edge [bend right, left,align=center] node [midway,below]{$1$} (R1)
                    edge  [loop left,align= center]      node {\small $0$} (G)
              (RL) edge [bend right, left,align=center]  node [midway,above=4pt]{\small $1$} (R1)
                    edge [bend right, right]                     node [midway,above]{\small$0$} (G)       
              (RL1) edge [bend right, left,align=center]  node [midway,above=4pt]{\small $1$} (R1)
                    edge [bend right, right]                     node [midway,above]{\small$0$} (RL)
              (Rdots) edge [bend right ,align=center]  node [midway,above=4pt]{\small $0$} (RL1)
            (R2) edge [bend right, left,align=center]  node [midway,above=4pt]{\small $1$} (R1)
                    edge [bend right, right]                     node [midway,above]{\small$0$} (Rdots)            
            (R1) edge [loop right,align =  center]     node {\small$1$} (R1)
                    edge [bend right, right]                    node [midway,above]{ \small$0$} (R2);
            
\end{tikzpicture}
      \caption{Model of the P300 ERP elicitation process as an FSM. The input, $X_n$, which governs the state transitions  is marked over the arrows. G is the ground state and $R_l, l = 1,2 \ldots L$ are refractory states. }
    \label{fig:fsc} 
    \end{figure*}

\subsection{Codebook design}\label{sec:codebook} 

Our approach to codebook design is to find distributions on the input sequence $\{X_n\}$ that maximize the mutual information rate in the channel, and then design codebooks that approximate these distributions. For the P300 speller, the channel input is a function of the user's target character and the codebook, $\mathbf C(w,n)$. For a fixed codebook of length $n$, the randomness in the input is due to the randomness in the target character. In order to design codebooks with good properties, we use a random construction in which the rows of the codebook are drawn i.i.d.\ from the distribution that maximizes the mutual information. We refer to this construction as a memory-based codebook for the model with refractory period of length $L$ (MBC($L$)). 


\section{Results} \label{results}

\subsection{Analysis of the optimal input distribution}



This section studies properties of the maximum information rate in the noiseless case, where the observed response $Y_n$ is equal to the intermediate output $Z_n$. In this setting, the optimization problem can be expressed in terms of maximizing entropy rates of run-length limited sequences \cite{zehavi1988,immink1998}.

To facilitate our analysis,  we find it useful to introduce a constrained class of Markov sources. Specifically, we define $\widetilde {\mathcal P}_r $ to be the set of all $r$-th order Markov processes of the form:
\begin{align}
P(X_n = 1|X_{n-r}^{n-1}) = 
\begin{cases}
a, & \text{if }X_{n-r}^{n-1} = \mathbf 0 \\
0, & \text{otherwise}
\end{cases}. \label{eq:P_constrained} 
\end{align}
%
Note that every sequence drawn according to this distribution has at least $r$ zeros between ones. In the rest of the paper, we will focus exclusively on the setting where the memory in source is matched to the memory in the channel, i.e. $r = L$. 




Observe that if the channel input is drawn according to a distribution $P$ in the constrained set $\tilde{\mathcal{P}}_L$, then the mapping between the input and noiseless output is invertible, and thus the mutual information is equal to the entropy of the input:
\begin{align*} 
I(X_1^n;Y_1^n|S_0) &= H(X_1^n|S_0) - H(X_1^n|Y_1^n,S_0)  \\ 
 & = H(X_1^n|S_0). 
\end{align*}
Therefore, the maximum information rate in the constrained setting can be expressed as
\begin{align} 
    \max_{ P \in \mathcal {\widetilde P}_L} \mathcal{R}(P)  &  =  \max_{ P \in \mathcal {\widetilde P}_L}  \lim_{n \rightarrow  \infty} \frac 1 n H(X_1^n|S_0) \notag \\
    & =  \max_{a \in  [0,1]} \frac {H_b(a)}{1 + La}, \label{eqn:Ccon}
\end{align}
where $H_b(p) = - p \log p - (1-p) \log p$ is the binary entropy function. Moreover, by differentiation it can be verified that  the maximum in \eqref{eqn:Ccon} is achieved by the unique solution   $a^* \in  [0,1]$ to the equation 
%
%
\begin{equation} \label{eqn:a}
a = (1-a)^{L + 1}. 
\end{equation} 
In the case of $L = 1$, the optimal value can be computed explicitly as $a^* = \frac {3 - {\sqrt 5}} 2$. As pointed out in \cite{cuff2008}, $1 -a^*$ is the inverse of the golden ratio. 

The next result shows that this rate also provides an upper bound on the mutual information rate. 

\begin{prop} \label{prop1}
Consider the P300 speller channel model in Figure~\ref{fig:commChannel}, with $L$ refractory states. For any distribution on  the channel input $\{X_n\}$ and initial state $S_0$,  the mutual information satisfies 
\begin{align}
\limsup_{n \to \infty} \frac{1}{n} I(X_1^n; Y_1^n  |S_0) \le \mathcal R^\text{UB}_L,
\end{align}
where
\begin{align}
\mathcal R^\text{UB}_L  = \max_{a \in  [0,1] } \frac {H_b(a)}{1 + La}.
\end{align}
\end{prop}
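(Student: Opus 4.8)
The plan is to peel off the noise and the deterministic part of the channel so that the problem reduces to bounding the entropy rate of the intermediate sequence $\{Z_n\}$, and then to recognize that $\{Z_n\}$ is a run-length limited (RLL) sequence whose entropy rate cannot exceed $\mathcal R^\text{UB}_L$. First I would apply the data processing inequality: since each $Y_n$ is produced from $Z_n$ through the memoryless channel, conditioned on $S_0$ the variables form the Markov chain $X_1^n \to Z_1^n \to Y_1^n$, so
\begin{align}
I(X_1^n;Y_1^n|S_0) \le I(X_1^n;Z_1^n|S_0).
\end{align}
Because the state sequence evolves deterministically from $S_0$ and $X_1^n$, the intermediate output $Z_1^n$ is a deterministic function of $(X_1^n,S_0)$, whence $H(Z_1^n|X_1^n,S_0)=0$ and $I(X_1^n;Z_1^n|S_0)=H(Z_1^n|S_0)$. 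Thus it suffices to upper bound $\tfrac1n H(Z_1^n|S_0)$.

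Next I would establish the combinatorial constraint satisfied by $\{Z_n\}$. From the transition rules, $Z_m=1$ forces $S_{m-1}=G$ and $S_m=R_1$; the state can only return to $G$ by passing through $R_1,\dots,R_L$, which takes at least $L$ further steps, and $Z_n=0$ in every refractory state. Hence any realization of $Z_1^n$, for any initial state $s_0$, lies in the set $\mathcal A_n$ of binary strings having at least $L$ zeros between consecutive ones. Since entropy is maximized by the uniform law on a finite support,
\begin{align}
H(Z_1^n|S_0) = \sum_{s_0} P(s_0)\,H(Z_1^n|S_0=s_0) \le \log|\mathcal A_n|.
\end{align}

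Finally I would identify the growth rate of $|\mathcal A_n|$. The limit $\lim_{n\to\infty}\tfrac1n\log|\mathcal A_n|$ is exactly the Shannon capacity of the $(L,\infty)$ RLL constraint, which by the Perron--Frobenius / maxentropic-measure theory of constrained systems \cite{immink1998} equals the largest entropy rate of a stationary Markov source supported on the constraint. That maximizing source is precisely a member of $\widetilde{\mathcal P}_L$, namely the law \eqref{eq:P_constrained} with $a=a^\ast$, so its entropy rate is $\max_{a}\tfrac{H_b(a)}{1+La}=\mathcal R^\text{UB}_L$ by the computation leading to \eqref{eqn:Ccon}. Combining the three displays and letting $n\to\infty$ gives the claim.

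The main obstacle is this last step: rigorously equating the combinatorial growth rate $\lim_n \tfrac1n\log|\mathcal A_n|$ with the variational formula $\mathcal R^\text{UB}_L$, and arguing that no nonstationary or non-Markov law on $\mathcal A_n$ can beat the maxentropic Markov measure. The clean resolution is that the bound $H(Z_1^n|S_0)\le\log|\mathcal A_n|$ already removes any dependence on the actual (possibly non-Markov) law of $Z_1^n$, reducing the problem to the purely combinatorial cardinality $|\mathcal A_n|$, whose asymptotics are governed by the dominant eigenvalue $\lambda_{\max}$ of the constraint's transfer matrix. One then checks that $\log\lambda_{\max}$ coincides with $\max_a \tfrac{H_b(a)}{1+La}$, e.g.\ by verifying that the optimal $a^\ast$ of \eqref{eqn:a} induces the Parry measure of the constraint graph.
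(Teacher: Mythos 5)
Your proposal follows essentially the same route as the paper: the data processing inequality to pass from $Y_1^n$ to $Z_1^n$, the observation that $Z_1^n$ is a deterministic function of $(X_1^n,S_0)$ so the mutual information collapses to $H(Z_1^n|S_0)$, and the identification of $\{Z_n\}$ as an $(L,\infty)$ run-length limited sequence. The only divergence is the last step, where the paper simply cites Theorem~1 of Zehavi--Wolf \cite{zehavi1988} for the bound $\limsup_n \tfrac1n H(Z_1^n|S_0) \le \max_{a}\tfrac{H_b(a)}{1+La}$, while you re-derive that fact via $H(Z_1^n|S_0)\le\log|\mathcal A_n|$ and Perron--Frobenius theory; this substitute is sound, and your proposed verification does go through, since $\lambda = 1/(1-a^*)$ with $a^*$ from \eqref{eqn:a} satisfies $\lambda^{L+1}-\lambda^{L}-1=0$ and $H_b(a^*)/(1+La^*) = -\log(1-a^*) = \log\lambda$.
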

\begin{proof}
Starting with the data processing inequality, we have
\begin{align*} 
\frac{1}{n} I(X_1^n, Y_1^n | S_0) & \le \frac{1}{n} I(X_1^n , Z_1^n | S_0)\\
& =   \frac{1}{n} H( Z_1^n | S_0) ,
\end{align*}
where the second step follows from the chain rule for mutual information and the fact that $Z_n$ is a deterministic function of the channel input. Next, we note that the IFSC must transition through all $L$ refractory states between successive ones, and thus $\{Z_n\}$ is an $(L,\infty)$ constrained binary sequence \cite{zehavi1988,immink1998}. Therefore, by \cite[Theorem~1]{zehavi1988}, it follows that 
\begin{align}
\limsup_{n \to \infty} \frac{1}{n} H(Z_1^n | S_0) \le  \max_{a \in  [0,1] } \frac {H_b(a)}{1 + La}.
\end{align}

\end{proof}

In light of Proposition~\ref{prop1}, we see that as far as the noiseless case is concerned, we can find an optimal distribution by restricting our attention to the constrained set $\widetilde{ \mathcal{P}}_L$. 

\begin{prop}\label{prop2}
Consider the P300 speller channel model in Figure~\ref{fig:commChannel}, with no noise (i.e.\ $Y_n = Z_n)$ and  $L$ refractory states. Let $P^*$ be the optimal distribution in the constrained set $\widetilde{ \mathcal{P}}_L$ with parameter $a^*$ defined by \eqref{eqn:a}. Then, $P^*$ achieves the maximum information rate for the channel, i.e. 
\begin{align}
\mathcal{R}_L(P^*)  =  \max_{ P \in \widetilde{\mathcal{P}}_L} \mathcal{R}_L(P)  = \max_{ P \in P_L} \mathcal{R}_L(P).
\end{align}
%
\end{prop}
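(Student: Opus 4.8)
The plan is to prove the two equalities by a sandwiching argument that combines the variational formula in \eqref{eqn:Ccon} with the converse bound of Proposition~\ref{prop1}. In fact, all of the substantive work has already been carried out: the first equality is essentially a restatement of \eqref{eqn:Ccon}, and the second follows by squeezing the maximum over the full class $\mathcal{P}_L$ between two quantities that both equal $\mathcal{R}^\text{UB}_L$. So the argument is one of assembly rather than new analysis.

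First I would record the leftmost equality. By construction, $P^*$ is the element of $\widetilde{\mathcal{P}}_L$ whose parameter $a^*$ solves \eqref{eqn:a}, and the discussion surrounding \eqref{eqn:Ccon} shows that this value of $a$ is the unique maximizer of $H_b(a)/(1+La)$ over $[0,1]$. Since \eqref{eqn:Ccon} identifies $\max_{P \in \widetilde{\mathcal{P}}_L} \mathcal{R}_L(P)$ with exactly this maximal value, which is in turn the definition of $\mathcal{R}^\text{UB}_L$ in Proposition~\ref{prop1}, I obtain
\[
\mathcal{R}_L(P^*) = \max_{P \in \widetilde{\mathcal{P}}_L} \mathcal{R}_L(P) = \mathcal{R}^\text{UB}_L .
\]

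Next I would handle the remaining equality by bounding $\max_{P \in \mathcal{P}_L} \mathcal{R}_L(P)$ from both sides. For the lower bound, I observe that every distribution of the form \eqref{eq:P_constrained} is in particular an $L$-th order Markov source, so $\widetilde{\mathcal{P}}_L \subseteq \mathcal{P}_L$ and hence $\max_{P \in \widetilde{\mathcal{P}}_L} \mathcal{R}_L(P) \le \max_{P \in \mathcal{P}_L} \mathcal{R}_L(P)$. For the upper bound, Proposition~\ref{prop1} asserts that $\limsup_{n} \frac{1}{n} I(X_1^n; Y_1^n | S_0) \le \mathcal{R}^\text{UB}_L$ for \emph{every} input distribution; specializing to $L$-th order Markov sources, and using that the rate exists as a genuine limit for this IFSC so that the limit coincides with the $\limsup$, yields $\mathcal{R}_L(P) \le \mathcal{R}^\text{UB}_L$ for all $P \in \mathcal{P}_L$, and therefore $\max_{P \in \mathcal{P}_L} \mathcal{R}_L(P) \le \mathcal{R}^\text{UB}_L$. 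Chaining these relations gives
\[
\mathcal{R}^\text{UB}_L = \max_{P \in \widetilde{\mathcal{P}}_L} \mathcal{R}_L(P) \le \max_{P \in \mathcal{P}_L} \mathcal{R}_L(P) \le \mathcal{R}^\text{UB}_L ,
\]
which forces equality throughout and completes the argument.

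Because the two genuinely hard ingredients—the closed-form optimization over the constrained class in \eqref{eqn:Ccon} and the converse via the $(L,\infty)$ run-length constraint in Proposition~\ref{prop1}—are already established, I do not expect any real analytic obstacle here. The only points that require care are bookkeeping: confirming the inclusion $\widetilde{\mathcal{P}}_L \subseteq \mathcal{P}_L$ so that the lower bound is legitimate, and invoking the existence of the information rate as a limit for the IFSC so that the $\limsup$ bound of Proposition~\ref{prop1} transfers to the limit defining $\mathcal{R}_L(P)$. The conceptual content is simply that enlarging the feasible set from $\widetilde{\mathcal{P}}_L$ to $\mathcal{P}_L$ cannot help, since the constrained optimum already meets the universal upper bound.
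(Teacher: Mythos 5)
Your proposal is correct and matches the paper's intended argument: the paper gives no separate proof of Proposition~\ref{prop2}, but the sentence preceding it (``In light of Proposition~\ref{prop1}\dots'') indicates precisely your sandwich---the constrained-class optimum from \eqref{eqn:Ccon} meets the universal upper bound $\mathcal{R}^\text{UB}_L$ of Proposition~\ref{prop1}, and the inclusion $\widetilde{\mathcal{P}}_L \subseteq \mathcal{P}_L$ forces equality throughout. Your added care about the $\limsup$ versus the limit for the IFSC is a reasonable bookkeeping point that the paper leaves implicit.
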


We remark that the distribution described in Proposition~\ref{prop2} might not be the only distribution that achieves the maximum.  In \cite{mayya2016}, we showed that in a channel with $L = 1$, when the input is a first order Markov source, there are at least two input transition matrices for which the maximum information rate is achieved. 


In the presence of noise, the problem of optimizing the information rate $\mathcal{R}(P)$ over $r$-th order Markov sources $P \in \mathcal P_r$ can be solved numerically using the GBAA~\cite{vontobel2008}. 



\subsection{Simulation results}

This section uses numerical simulations to compare the performance of our memory-based codebook design with the RCP, CBP, and D10 codebooks. 
We estimate accuracy as a function of the channel noise parameter and the number of states in the channel. 
Following the setup in \cite{mainsah2016}, these simulations apply to the P300 speller layout shown in Figure~\ref{fig:screen}. The noise is modeled using an additive white Gaussian  noise (AWGN) channel with noise power $\sigma^2$. 

Using the GBAA, we find the optimal input transition matrix for general $\sigma^2$. We then generate codebooks that are optimized for channel noise and memory, MBC($L$), as described in Section~\ref{bci}.

In simulations, we select one of 36 characters uniformly as the target character. The codeword associated with that character is transmitted across the channel. The received sequence is used to estimate the target character using an optimal decoder. The accuracy is the percentage of characters that are correctly estimated over 1000 runs. 

Figure~\ref{fig:allresults} compares the performance of the codebooks for channels with $L = 1, 2$ refractory states, where the corresponding MBC($L$) is generated based on the number of channel states. In both cases, the MBC($L$) performs better than all the other codebooks for a given channel.

Figure~\ref{fig:compare3} shows the performance of MBC($L$) associated with a channel with $L = 1, 2, 3$ refractory states. The figure illustrates the decrease in the performance of MBC($L$) as $L$ increases. 

These results illustrate the benefits that can be achieved when the codebooks are designed based on the process underlying the generation of refractory effects. The extent to which our model is representative of refractory effects in the P300 speller is an important direction for future work.  



\begin{figure*}[t!]
\centering
 \begin{subfigure}[t]{0.39\textwidth}
 \includegraphics[width=\textwidth]{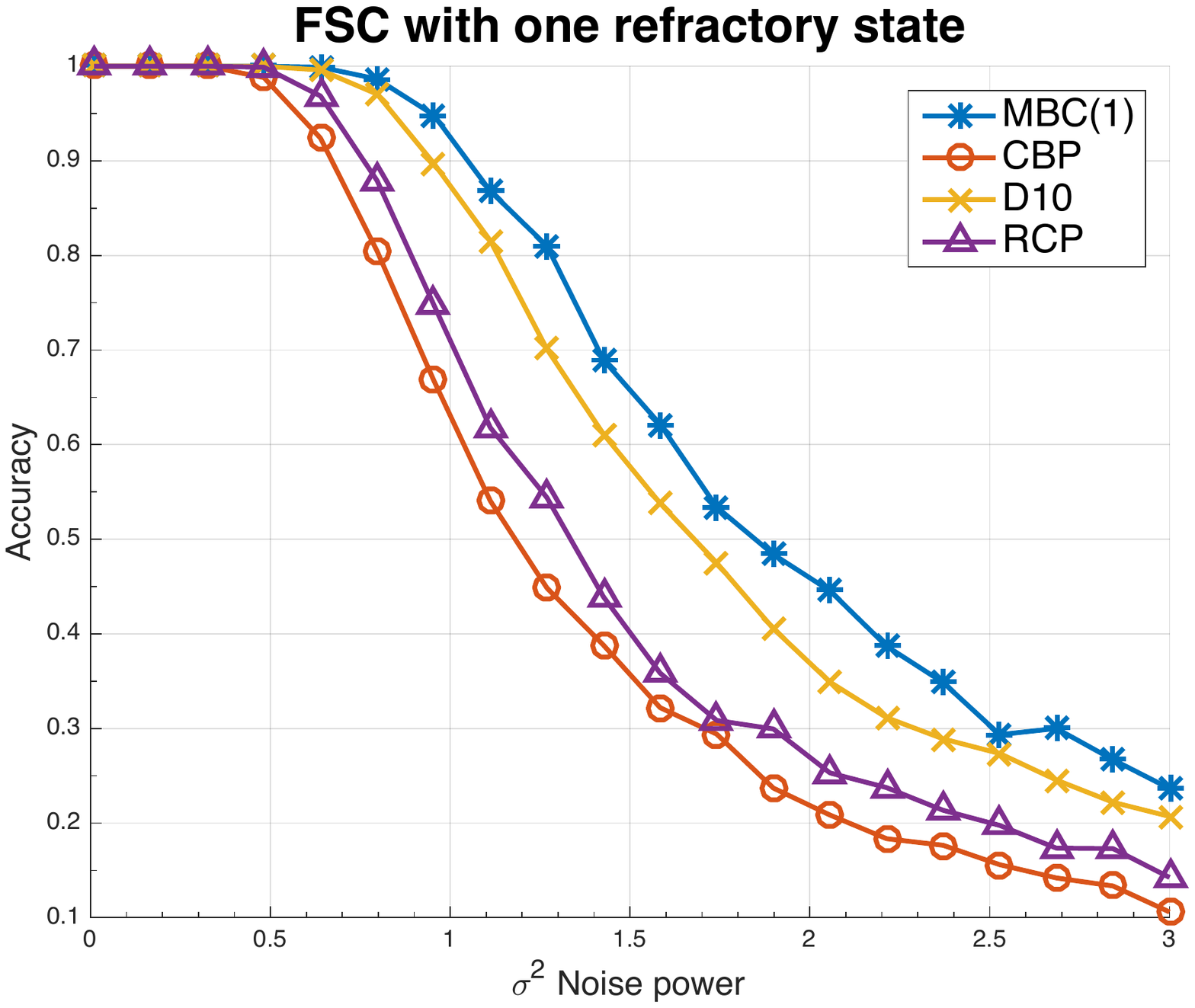}
 \caption{} \label{fig:results1}
\end{subfigure}
 \begin{subfigure}[t]{0.39\textwidth}
 \includegraphics[width=\textwidth]{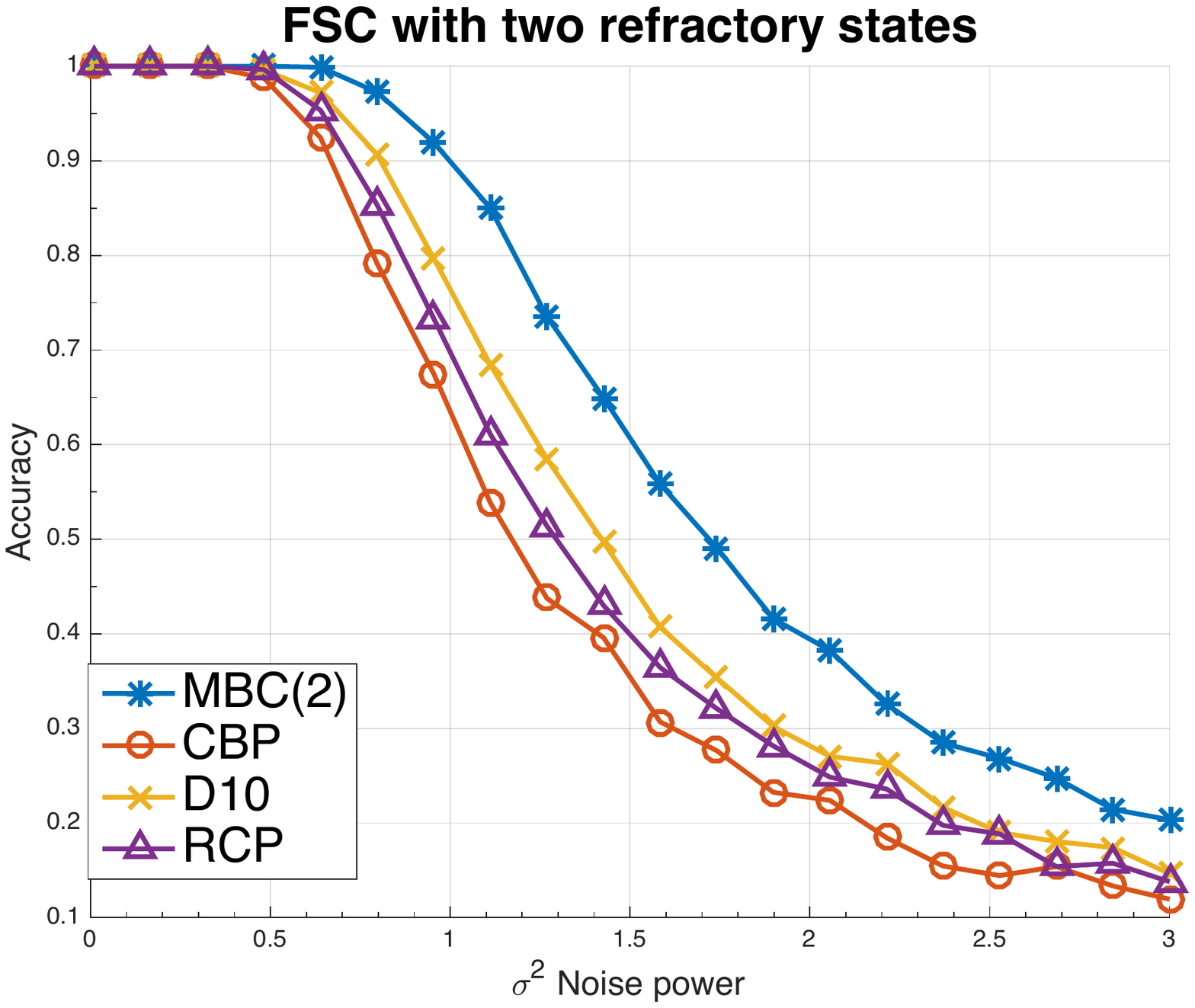}
 \caption{} \label{fig:results2}
\end{subfigure}
\caption{Codebook performance as a function of channel noise parameter, $\sigma^2$, for the P300 speller channel with (a) $L = 1$ and (b) $L = 2$ refractory states. The MBC($L$) outperforms the other codebooks in both channel models. }
\label{fig:allresults}
\end{figure*}

\begin{figure}
\centering
 \includegraphics[width=0.8\columnwidth]{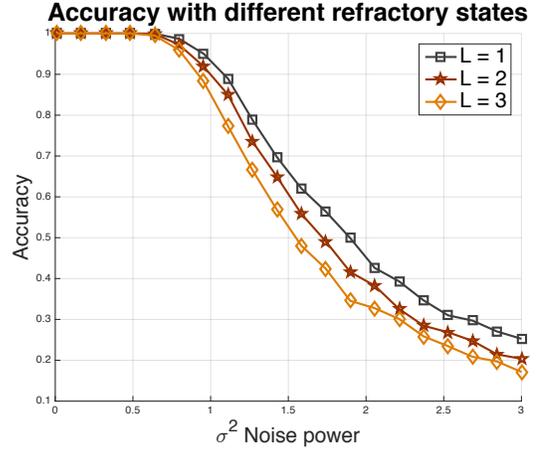}
\caption{Performance of the MBC($L$) as $L$ increases. }
\label{fig:compare3}
\end{figure}

\section{Conclusion} \label{disc}

	
This paper develops a communication model to represent the ERP elicitation process in the P300 speller and then uses this model to design codebooks that are optimized as a function of the length of the refractory period and channel noise. Simulation results suggest that this flexible framework for codebook design could lead to improved performance in settings where refractory effects have a significant impact on the accuracy of the P300 speller. The performance of the memory-based codebook needs to be verified with EEG data and validated with online implementation. 



\bibliography{P300References}
\bibliographystyle{IEEEtran}


%
%
%


\end{document}